\documentclass[runningheads]{llncs}

\usepackage{booktabs,multirow, subfigure, tabularx}
\usepackage{bm, array, amsmath, amsfonts, graphicx, amssymb}
\usepackage{sidecap}
\sidecaptionvpos{figure}{t}
\usepackage[ruled,linesnumbered]{algorithm2e}
\usepackage[colorlinks=true, allcolors=blue]{hyperref}
\newcommand{\furl}[1]{\footnote{\url{http://#1}}}

\begin{document}
\title{Discovering Bayesian Market Views \\for Intelligent Asset Allocation\thanks{A camera-ready version to appear at \emph{ECML-PKDD 2018}.}}
\titlerunning{Discovering Bayesian Market Views} 
\author{Frank Z. Xing\inst{1} \and Erik Cambria\inst{1} \and Lorenzo Malandri\inst{2} \and Carlo Vercellis\inst{2}}
\authorrunning{F. Xing et al.} 
\institute{School of Computer Science and Engineering, Nanyang Technological University\\
\and
Data Mining and Optimization Research Group, Politecnico di Milano\\
\email{\{zxing001,cambria\}@ntu.edu.sg \{lorenzo.malandri,carlo.vercellis\}@polimi.it}}
\maketitle

\begin{abstract}
Along with the advance of opinion mining techniques, public mood has been found to be a key element for stock market prediction. However, how market participants' behavior is affected by public mood has been rarely discussed. Consequently, there has been little progress in leveraging public mood for the asset allocation problem, which is preferred in a trusted and interpretable way. In order to address the issue of incorporating public mood analyzed from social media, we propose to formalize public mood into market views, because market views can be integrated into the modern portfolio theory. In our framework, the optimal market views will maximize returns in each period with a Bayesian asset allocation model. We train two neural models to generate the market views, and benchmark the model performance on other popular asset allocation strategies. Our experimental results suggest that the formalization of market views significantly increases the profitability ($5\%$ to $10\%$ annually) of the simulated portfolio at a given risk level. 
\keywords{Market views, Public mood, Asset allocation}
\end{abstract}

\section{Introduction} \label{intro}
Sales and macroeconomic factors are some of the driving forces behind stock movements but there are many others. For example, the subjective views of market participants also have important effects. Along with the growing popularity of social media in the past decades, people tend to rapidly express and exchange their thoughts and opinions~\cite{10-OCB-T}. As a result, the importance of their views has dramatically risen~\cite{16-CE-T}. Currently, stock movements are considered to be essentially affected by new information and the beliefs of investors~\cite{15-LQ-T}.

Meanwhile, sentiment analysis has emerged as a new tool for analyzing the opinions shared on social media~\cite{17-CE-B}. It is a branch of affective computing research that aims to classify natural language utterances as either positive or negative, but sometimes also neutral~\cite{17-CI-T}. In the financial domain, sentiment analysis is frequently used to obtain a data stream of public mood toward a company, stock, or the economy. 
Public mood is the aggregation of individual sentiments which can be obtained and estimated from various sources, such as stock message boards~\cite{04-AW-F,15-NT-T}, blogs, newspapers, and really simple syndication (RSS) feeds~\cite{10-ZW-T}. 

Recently, Twitter has become a dominant microblogging platform on which many works rely for their investigations, such as~\cite{13-SJ-S,15-RG-S,15-NM-F}. Many previous studies support the claim that public mood helps to predict the stock market. For instance, the fuzzy neural network model considering public mood achieves high directional accuracy in predicting the market index. The mood time series is also proved a Granger cause of the market index~\cite{11-BJ-T}. Si et al.~build a topic-based sentiment time series and predict the market index better with a vector autoregression model to interactively link the two series~\cite{13-SJ-T}. The Hurst exponents also suggest a long-term dependency for time series of mood extracted form financial news, similar to many market indices~\cite{17-CS-T}.

Despite the important role in stock market prediction, we assume that public mood does not directly effect the market: it does \emph{indirectly} through market participants' views. The actions taken by market participants as agents, are dependent on their own views, and their knowledge about other agents' views. The changes of asset prices are the consequences of such actions. These assumptions are very different from econometric research using productivity, equilibrium, and business cycle models~\cite{13-AGM-E}, but closer to agent-based models~\cite{08-HC-E}. However, the mechanism of how market views are formed from public mood is heavily overlooked even in the latter case. An intuitive hypothesis could be: the happier the public mood, the higher the stock price. In the real-world market, however, this relationship is far more complicated. Therefore, existing superficial financial applications of AI do not appear convincing to professionals.

In this paper, we attempt to fill this gap by proposing a method for incorporating public mood to form market views computationally. To validate the quality of our views, we simulate the trading performance with a constructed portfolio. The key \emph{contributions} of this paper can be summarized as follows:

\begin{enumerate}
\item We introduce a stricter and easier-to-compute definition of the market views based on a Bayesian asset allocation model. We prove that our definition is compatible, and has the equivalent expressiveness as the original form. 
\item We propose a novel online optimization method to estimate the expected returns by solving temporal maximization problem of portfolio returns.
\item Our experiments show that the portfolio performance with market views blending public mood data stream is \emph{better} than directly training a neural trading model without views. This superiority is robust for different models selected with the right parameters to generate market views.
\end{enumerate}
The remainder of the paper is organized as follows: Sect.~\ref{bkg} explains the concept of Bayesian asset allocation; following, we describe the methodologies developed for modeling market views in Sect.~\ref{mtd}; we evaluate such methodologies by running trading simulations with various experimental settings in Sect.~\ref{rsts} and show the interpretability of our model with an example in Sect.~\ref{story}; finally, Sect.~\ref{con} concludes the paper and describes future work.

\section{Bayesian Asset Allocation} \label{bkg}
The portfolio construction framework~\cite{52-MH-F} has been a prevalent model for investment for more than half a century. Given the an amount of initial capital, the investor will need to allocate it to different assets. Based on the idea of trading-off between asset returns and the risk taken by the investor, the mean-variance method proposes the condition of an efficient portfolio as follows~\cite{52-MH-F,01-SM-M}:
\begin{align} \label{mpt}
& {\text{maximize}}
& & \begin{matrix} \text{return item} \\\overbrace{\sum^N_{i=1} \mu_i w_i} \end{matrix}
\begin{matrix} \\- \end{matrix}
\begin{matrix} \text{risk item} \\ \overbrace{\frac{\delta}{2}\sum^N_{i=1}\sum^N_{j=1} w_i \sigma_{ij} w_j} \end{matrix}\\
& \text{subject to}
& & \sum^N_{i=1} w_i =1,\ i=1,2,...,N.\ \ \ \ w_i\geq 0.\nonumber
\end{align}
where $\delta$ is an indicator of risk aversion, $w_i$ denotes the weight of the corresponding asset in the portfolio, $\mu_i$ denotes the expected return of asset $i$, $\sigma_{ij}$ is the covariance between returns of asset $i$ and $j$. The optimized weights of an efficient portfolio is therefore given by the first order condition of Eq.~\ref{mpt}: 
\begin{equation} \label{hfoc}
w^{\ast} = (\delta \Sigma)^{-1} \mu
\end{equation}
where $\Sigma$ is the covariance matrix of asset returns and $\mu$ is a vector of expected returns $\mu_i$. At the risk level of holding $w^{\ast}$, the efficient portfolio achieves the maximum combinational expected return.

However, when applying this mean-variance approach in real-world cases, many problems are faced. For example, the two moments of asset returns are difficult to estimate accurately~\cite{17-SW-T}, as they are non-stationary time series. The situation is worsened by the fact that, the Markowitz model is very sensitive to the estimated returns and volatility as inputs. The optimized weights can be very different because of a small error in $\mu$ or $\Sigma$.
To address the limitation of the Markowitz model, a Bayesian approach that integrates the additional information of investor's judgment and the market fundamentals was proposed by Black and Litterman~\cite{91-BF-F}. In the Black-Litterman model, the expected returns $\mu_{BL}$ of a portfolio is inferred by two antecedents: the equilibrium risk premiums $\Pi$ of the market as calculated by the capital asset pricing model (CAPM), and a set of views on the expected returns of the investor. 

The Black-Litterman model assumes that the equilibrium returns are normally distributed as $r_{eq}\sim \mathcal{N}(\Pi, \tau\Sigma)$, where $\Sigma$ is the covariance matrix of asset returns, $\tau$ is an indicator of the confidence level of the CAPM estimation of $\Pi$. The market views on the expected returns held by an investor agent are also normally distributed as $r_{views}\sim \mathcal{N}(Q, \Omega)$. 

Subsequently, the posterior distribution of the portfolio returns providing the views is also Gaussian. If we denote this distribution by $r_{BL}\sim \mathcal{N}(\bar{\mu}, \bar{\Sigma})$, then $\bar{\mu}$ and $\bar{\Sigma}$ will be a function of the aforementioned variables (see Fig.~\ref{bld}). 
\begin{equation} \label{func}
\left[\bar{\mu}, \bar{\Sigma}\right] = f (\tau, \Sigma, \Omega, \Pi, Q)
\end{equation}
\begin{SCfigure}
\caption{The posterior distribution of the expected returns as in the Black-Litterman model, which has a mean between two prior distributions and a variance less than both of them.}\label{bld}
\includegraphics[width=0.5\textwidth]{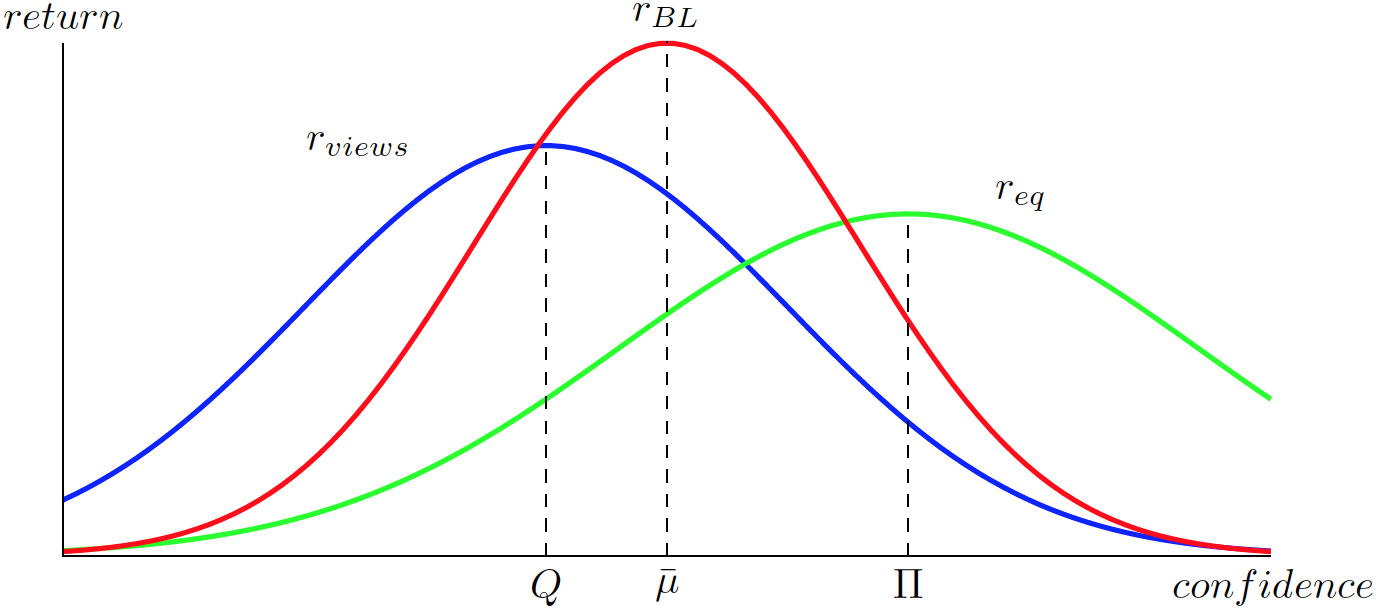}
\end{SCfigure}

The function can be induced from applying Bayes' theorem on the probability density function of the posterior expected returns:
\begin{equation} \label{func}
pdf(\bar{\mu}) = \frac{pdf(\bar{\mu}|\Pi)\ pdf(\Pi)}{pdf(\Pi|\bar{\mu})}
\end{equation}\\
Then, the optimized Bayesian portfolio weights have a similar form to Eq.~\ref{hfoc}, only substituting $\Sigma$ and $\mu$ by $\bar{\Sigma}$ and $\bar{\mu}$:
\begin{equation}
w^{\ast}_{BL} = (\delta \bar{\Sigma})^{-1} \bar{\mu}.
\end{equation}

The most common criticism of the Black-Litterman model is the subjectivity of investor's views. In other words, the model resorts to the good quality of the market views, while it leaves the question of how to actually form these views unanswered. In Sect.~\ref{mtd}, we will investigate the possibility of automatically formalizing the market views from public mood distilled from the Web and the maximization of portfolio returns for each time period.  

\section{Methodologies} \label{mtd}
\subsection{Modeling Market Views}
The Black-Litterman model defines a view as a statement that the expected return of a portfolio has a normal distribution with mean equal to $q$ and a standard deviation given by $\omega$. This hypothetical portfolio is called a~\emph{view portfolio}~\cite{02-HG-F}. In practice, there are two intuitive types of views on the market, termed \emph{relative views} and \emph{absolute views}, that we are especially interested in. Next, we introduce the formalization of these two types of views.

Because the standard deviation $\omega$ can be interpreted as the confidence of expected return of the view portfolio, a \emph{relative view} takes the form of ``I have $\omega_1$ confidence that asset $x$ will outperform asset $y$ by $a\%$ (in terms of expected return)''; an \emph{absolute view} takes the form of ``I have $\omega_2$ confidence that asset $z$ will outperform the (whole) market by $b\%$''. Consequently, for a portfolio consisting of $n$ assets, a set of $k$ views can be represented by three matrices $P_{k,n}$, $Q_{k,1}$, and $\Omega_{k,k}$. 

$P_{k,n}$ indicates the assets mentioned in views. The sum of each row of $P_{k,n}$ should either be $0$ (for relative views) or $1$ (for absolute views); $Q_{k,1}$ is a vector comprises expected returns for each view. Mathematically, the confidence matrix $\Omega_{k,k}$ is a measure of covariance between the views. 
The Black-Litterman model assumes that the views are independent of each other, so the confidence matrix can be written as $\Omega=diag(\omega_1, \omega_2, ..., \omega_n)$. In fact, this assumption will not affect the expressiveness of the views as long as the $k$ views are compatible (not self-contradictory). Because when $\Omega_{k,k}$ is not diagonal, we can always do spectral decomposition: $\Omega=V\Omega^{\Lambda}V^{-1}$. Then we write the new mentioning and new expected return matrices as $P^{\Lambda}=V^{-1}P$, $Q^{\Lambda}=V^{-1}Q$, where $\Omega^{\Lambda}$ is diagonal.
Under these constructions, we introduce two important properties of the view matrices in Theorem~\ref{thm:civ} and Theorem~\ref{thm:uavm}.

\begin{theorem} [Compatibility of Independent Views]\label{thm:civ}
Any set of independent views are compatible. 
\end{theorem}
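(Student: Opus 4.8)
The plan is to convert the qualitative notions of \emph{independence} and \emph{compatibility} into linear-algebraic statements about the view matrices and then apply a standard rank criterion. At the level of means, each view asserts that its view portfolio (a row $p_i$ of $P$) earns expected return $Q_i$, so a set of $k$ views is simply the linear system $P\mu = Q$ in the unknown vector of expected returns $\mu\in\mathbb{R}^n$. Compatibility is a statement about the pair $(P,Q)$ alone and is unaffected by the confidence matrix $\Omega$; I would therefore define the views to be \emph{compatible} exactly when $P\mu = Q$ has at least one solution $\mu$, and \emph{self-contradictory} otherwise. Correspondingly I read ``independent views'' as linear independence of the rows $p_1,\dots,p_k$ of $P$, i.e.\ $\mathrm{rank}(P)=k$.

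With these identifications the theorem reduces to the claim that full row rank of $P$ makes $P\mu=Q$ solvable for every $Q$. The key step is a rank comparison: the augmented matrix $[\,P\mid Q\,]$ has only $k$ rows, so $\mathrm{rank}([\,P\mid Q\,])\le k$, while it contains $P$ and hence $\mathrm{rank}([\,P\mid Q\,])\ge\mathrm{rank}(P)=k$; thus the two ranks coincide and the Rouch\'e--Capelli criterion yields consistency. If a concrete witness is preferred, full row rank makes $PP^{\top}$ invertible, and $\mu^{\ast}=P^{\top}(PP^{\top})^{-1}Q$ verifies $P\mu^{\ast}=Q$, exhibiting an expected-return vector that simultaneously realizes all the views.

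As a sanity check I would run the contrapositive: if a set of views is self-contradictory then $\mathrm{rank}([\,P\mid Q\,])>\mathrm{rank}(P)$, which forces $\mathrm{rank}(P)<k$, so the rows of $P$ are linearly dependent; intuitively, some view is a linear combination of the others whose implied return clashes with the matching combination of the $Q_i$ (as in ``$x$ beats $y$ by $a\%$'' together with ``$y$ beats $x$ by $a\%$''). This framing also makes clear that no separate hypothesis $k\le n$ is required, since linear independence of $k$ vectors in $\mathbb{R}^n$ already forces it.

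The mathematics is elementary, so I do not anticipate a genuine technical obstacle; the real difficulty is definitional. The main subtlety will be fixing formalizations of ``independent'' and ``compatible'' that are faithful to the Black--Litterman setup -- where the rows of $P$ are constrained to sum to $0$ or $1$ and each view carries a confidence $\omega_i$ -- yet strong enough to make the statement an honest linear-algebra fact rather than a restatement of its own hypothesis.
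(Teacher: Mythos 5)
Your proposal is correct and rests on the same linear-algebraic fact as the paper: identifying independence with full row rank of $P$ and compatibility with consistency of $P\mu=Q$, the paper simply argues the contrapositive (an incompatible pair $\{p,q\}$, $\{p,q'\}$ derivable from the views forces a nontrivial dependence $\sum_i(a_i-b_i)p_i=\mathbf{0}$), which is precisely your ``sanity check'' paragraph. Your direct Rouch\'e--Capelli version with the explicit witness $\mu^{\ast}=P^{\top}(PP^{\top})^{-1}Q$ is just a cleaner packaging of the same argument, with the added virtue of making the definitions explicit where the paper leaves them informal.
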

\begin{proof}
Compatible views refer to views that can hold at the same time. For example, \{asset $x$ will outperform asset $y$ by $3\%$, asset $y$ will outperform asset $z$ by $5\%$, asset $x$ will outperform asset $z$ by $8\%$\} is compatible. However, if we change the third piece of view to ``asset $z$ will outperform asset $x$ by $8\%$", the view set becomes self-contradictory. Because the third piece of view is actually a deduction from the former two, the view set is called ``not independent".

Assume there is a pair of incompatible views $\{p, q\}$ and $\{p, q^{\prime}\}$, $q\neq q^{\prime}$. Both views are either explicitly stated or can be derived from a set of $k$ views. 
Hence, there exist two different linear combinations, such that: 
\begin{subequations}{} 
\begin{equation}
\sum\limits_{i=1}^{k} a_i p_i = p \qquad\sum\limits_{i=1}^{k} a_i q_i = q\nonumber
\end{equation}
\begin{equation}
\sum\limits_{i=1}^{k} b_i p_i = p \qquad\sum\limits_{i=1}^{k} b_i q_i = q^{\prime}\nonumber
\end{equation}
\end{subequations}
where $(a_i-b_i)$ are not all zeros.

Thus, we have $\sum\limits_{i=1}^{k} (a_i-b_i) p_i =\bf{0}$, which means that matrix $P$ is rank deficient and the $k$ views are not independent.
According to the law of contrapositive, the statement ``all independent view sets are compatible'' is true. \qed
\end{proof}

\begin{theorem} [Universality of Absolute View Matrix]\label{thm:uavm}
Any set of independent relative and absolute views can be expressed with a non-singular absolute view matrix. 
\end{theorem}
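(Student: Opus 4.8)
The plan is to show that, up to information-preserving transformations of the view system, every independent set of relative and absolute views can be rewritten so that the view matrix is square, has all row sums equal to $1$, and is invertible. I would work with the pair $(P, Q)$ and regard two view systems as equivalent when they impose the same affine constraint $P\mu = Q$ on the expected-return vector $\mu$ (equivalently, they induce the same Black-Litterman posterior). The admissible operations are then the elementary row operations on the augmented array $[\,P \mid Q\,]$, each of which rescales or combines views without changing the information they carry.

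First I would record that independence means $P_{k,n}$ has full row rank $k \le n$, exactly as used in the proof of Theorem~\ref{thm:civ}. The central structural observation is that, under the paper's convention, absolute views form the affine hyperplane $H = \{v \in \mathbb{R}^n : v^{\top}\mathbf{1} = 1\}$ while relative views lie in the linear hyperplane $\mathbf{1}^{\perp} = \{v : v^{\top}\mathbf{1} = 0\}$. Two facts drive the argument: (i) if $a \in H$ and $r \in \mathbf{1}^{\perp}$ then $r + a \in H$, so adding a single absolute view to a relative one yields an absolute view; and (ii) the standard basis vectors $e_1, \dots, e_n$ all lie in $H$ and span $\mathbb{R}^n$, so $H$ contains full bases of $\mathbb{R}^n$ despite not passing through the origin.

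Next I would guarantee that an absolute reference view is always available and that the system can be completed to a square one. If $k = n$, full rank forbids all rows from lying in the $(n-1)$-dimensional space $\mathbf{1}^{\perp}$, so at least one row has nonzero row sum and can be rescaled into $H$. If $k < n$, I would append $n-k$ single-asset absolute views $e_j^{\top}\mu = q_j$, each time choosing an index $j$ whose $e_j$ is not yet in the current row span; fact (ii) ensures this can be repeated until the matrix is $n \times n$ with full rank, and Theorem~\ref{thm:civ} guarantees the enlarged independent set stays compatible, so the appended entries of $Q$ can be chosen consistently. Either way I obtain a non-singular $n \times n$ system containing at least one absolute row.

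Finally, fixing one absolute row $a$ (with value $q_a$) as reference, I would replace each remaining relative row $r_i$ by $r_i + a$ and each $Q_i$ by $Q_i + q_a$, invoking fact (i); this is an elementary row operation, so it preserves both non-singularity and the constraint set while moving every row into $H$. The outcome is a non-singular matrix all of whose rows sum to $1$, i.e., a non-singular absolute view matrix expressing the given views. I expect the main obstacle to be the bookkeeping around the all-relative case: a priori no linear combination of relative views is absolute, so the argument genuinely relies on the augmentation/rescaling step to inject an absolute reference, and on verifying through Theorem~\ref{thm:civ} that this injection never destroys compatibility.
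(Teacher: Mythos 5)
Your proof is correct in substance but takes a genuinely different route from the paper's. The paper converts each relative view into an absolute one (``outperforms the \emph{market} by $b\%$'') by setting up $(r+1)$ equations in $r$ unknown returns $q_j^{\prime}$, built from the capital weights $w_i$ and the compounded-return identities, and then invoking Cramer's rule to get a unique equivalent system $\{P^{\prime},Q^{\prime}\}$ whose first $r$ rows are standard basis vectors; the absolute ``level'' is pinned down by the market-clearing equation $\sum_i q_i^{\prime}w_i+\dots=Qw^{\intercal}$ rather than by an existing absolute row. You instead work purely with the affine structure of the constraint $P\mu=Q$: relative rows live in $\mathbf{1}^{\perp}$, absolute rows in the affine hyperplane of row-sum one, and elementary row operations on $[\,P\mid Q\,]$ let you add one absolute reference row to every relative row, after completing the system to a non-singular $n\times n$ matrix with single-asset views $e_j$ when $k<n$. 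Your version is cleaner and more rigorous where the paper is hand-wavy (the independence of the paper's $(r+1)$ equations is asserted, not shown, and ``considering assets $r+1,\dots,k$ to be one asset'' is left informal), and your observation that $n$ independent views cannot all be relative is a nice touch the paper omits. What the paper's route buys is that the resulting $Q^{\prime}$ entries are financially meaningful outperformance-of-the-market figures tied to capitalization weights, and no views need to be appended. Two small caveats on your side: under the theorem's hypothesis every row sum is already $0$ or $1$, so your rescaling step in the $k=n$ case is vacuous (a nonzero row sum is automatically $1$, i.e., an absolute view already exists); and your row operations transform the confidence matrix as $\Omega\mapsto M\Omega M^{\top}$, which destroys diagonality --- you should note, as the paper does just before the theorem, that this is repaired by the spectral-decomposition argument, so equivalence of the probabilistic views (not just of the deterministic constraint set) is preserved.
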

\begin{proof}
Assume a matrix $P$ with $r$ relative views and $(k-r)$ absolute views.

\begin{centering}
$P_{k,n} = 
 \begin{bmatrix}
 p_{1,1} & p_{1,2} & \cdots & p_{1,n} \\
 \vdots & \vdots & \ddots & \vdots \\
 p_{r,1} & p_{r,2} & \cdots & p_{r,n} \\
 \vdots & \vdots & \ddots & \vdots \\
 p_{k,1} & p_{k,2} & \cdots & p_{k,n} 
 \end{bmatrix}$\\
\end{centering}

The corresponding return vector is $Q=(q_1, q_2, \dots, q_k)$, the capital weight vector for assets is $w=(w_1, w_2, \dots, w_k)$. Hence, we can write $(r+1)$ equations with regard to $r$ new variables $\{q_1^{\prime}, q_2^{\prime}, ..., q_r^{\prime}\}$, where $j=1, 2, ..., r$:
\begin{subequations}{} 
\begin{equation}
1+ q_j^{\prime}=\sum\limits_{i\neq j}^{r}(1+q_i^{\prime}) \frac{w_i}{\sum\limits_{s\neq j} w_s} (1+q_j)\nonumber
\end{equation}
\begin{equation}
\sum\limits_{i=1}^{r}q_i^{\prime}w_i+\sum\limits_{i=r+1}^{k}q_i w_i=Qw^\intercal \nonumber
\end{equation}
\end{subequations}
If we consider $\{asset_{r+1},\dots, asset_k\}$ to be one asset, return of this asset is decided by $P_{r,n}$. Hence, $r$ out of the $(r+1)$ equations above are independent.

According to Cramer's rule, there exists a unique solution $Q^{\prime}=(q_1^{\prime}, q_2^{\prime}, \dots, q_r^{\prime},\\ q_{r+1}, \dots, q_k)$ to the aforementioned $(r+1)$ equations, such that view matrices $\{P^{\prime}, Q^{\prime}\}$ is equivalent to view matrices $\{P, Q\}$ for all the assets considered, where

\begin{centering}
$P^{\prime}_{k,n} = 
 \begin{bmatrix}
 1 & 0 & \cdots & 0 \\
 \vdots & \vdots & \ddots & \vdots \\
 0 & p_{r,r}=1 & \cdots & 0 \\
 \vdots & \vdots & \ddots & \vdots \\
 p_{k,1} & p_{k,2} & \cdots & p_{k,n} 
 \end{bmatrix}$.\\
\end{centering}
Now, $P^{\prime}_{k, n}$ only consists of absolute views. By deleting those dependent views, we can have a non-singular matrix that only consists of absolute views and is compatible. \qed
\end{proof}
Given Theorem~\ref{thm:civ} and Theorem~\ref{thm:uavm}, without loss of generality, we can use the following equivalent yet stricter definition of market views to reduce computational complexity. 

\begin{definition}
Market views on $n$ assets can be represented by three matrices $P_{n,n}$, $Q_{n,1}$, and $\Omega_{n,n}$, where $P_{n,n}$ is an identity matrix; $Q_{n,1} \in \mathbb{R}^n$; $\Omega_{n,n}$ is a nonnegative diagonal matrix. 
\end{definition}

\subsection{The Confidence Matrix}
In the most original form of the Black-Litterman model, the confidence matrix $\Omega$ is set manually according to investors' experience. Whereas in the numerical example given by~\cite{02-HG-F}, the confidence matrix is derived from the equilibrium covariance matrix:
\begin{equation}
\hat{\Omega}_0 = diag(P(\tau\Sigma)P^{\prime})
\end{equation}
This is because $P(\tau\Sigma)P^{\prime}$ can be understood as a covariance matrix of the expected returns in the views as well. Using our definition, it is easier to understand this estimation, because $P$ is an identity matrix, $P(\tau\Sigma)P^{\prime}$ is already diagonal. The underlying assumption is that the variance of an absolute view on asset $i$ is proportional to the volatility of asset $i$. In this case, the estimation of $\Omega$ utilizes past information of asset price volatilities. 

\subsection{Optimal Market Views}
We obtain the optimal market views $\{P, Q, \Omega\}$ in a hybrid way, first we adopt the confidence matrix $\hat{\Omega}_0$, then $Q$ can be derived from the inverse optimization problem using the Black-Litterman model.

We start from the optimal portfolio weights that maximize the portfolio returns for each period $t$. Obviously, without short selling and transaction fees, one should re-invest his whole capital daily to the fastest-growing asset in the next time period. 

The optimal holding weights for each time period $t$ thus take the form of a one-hot vector, where $\oslash$ and $\odot$ denote element-wise division and product:
\begin{equation} 
w_t^{\ast}=\mathrm{argmax}\;\; w_t \oslash price_{t} \odot price_{t+1}
\end{equation}
Let this $w_t^{\ast}$ be the solution to Eq.~\ref{mpt}, we will have:
\begin{equation} \label{optw}
w_t^{\ast} = (\delta \bar{\Sigma}_t)^{-1} \bar{\mu}_t
\end{equation}
where the Black-Litterman model gives\footnote{The proof of Eq.~\ref{blt1} and~\ref{blt2} can be found from the appendix of~\cite{00-SS-F}.}:
\begin{align} 
&\bar{\Sigma}_t=\Sigma_t+[(\tau\Sigma_t)^{-1}+P^{\prime}\hat{\Omega}^{-1}_tP]^{-1}\label{blt1}\\
&\bar{\mu}_t= [(\tau\Sigma_t)^{-1}+P^{\prime}\hat{\Omega}^{-1}_tP]^{-1}[(\tau\Sigma_t)^{-1}\Pi_t+P^{\prime}\hat{\Omega}^{-1}_t Q_t] \label{blt2}
\end{align}

According to Eq.~\ref{optw},~\ref{blt1}, and~\ref{blt2}, the optimal expected returns for our market views for each period $t$ is:
\begin{equation} \label{qtstar}
\begin{split}
Q_t^{\ast} &=\hat{\Omega}_{0,t}\big\{[\,(\tau\Sigma_t)^{-1}+P^{\prime}\hat{\Omega}_{0,t}^{-1}P\,]\,\bar{\mu}_t-(\tau\Sigma_t)^{-1}\Pi_t\big\}\\
&=\delta[\,\hat{\Omega}_{0,t}(\tau\Sigma_t)^{-1}+\mathbb{I}\,]\,\bar{\Sigma}_tw_t^{\ast}-\hat{\Omega}_{0,t}(\tau\Sigma_t)^{-1}\Pi_t\\
&=\delta[\,\hat{\Omega}_{0,t}(\tau\Sigma_t)^{-1}+\mathbb{I}\,]\,[\,\Sigma_t+[(\tau\Sigma_t)^{-1}+\hat{\Omega}^{-1}_t]^{-1}\,]w_t^{\ast}\\
&\quad-\hat{\Omega}_{0,t}(\tau\Sigma_t)^{-1}\Pi_t
\end{split}
\end{equation}

\subsection{Generating Market Views with Neural Models}
Eq.~\ref{qtstar} provides a theoretical perspective on determining the expected return of optimal market views. However, computing $w_t^{\ast}$ requires future asset prices, which is not accessible. Therefore, the feasible approach is to learn approximating $Q_t^{\ast}$ with historical data and other priors as input. We use the time series of asset prices, trading volumes, and public mood data stream to train neural models ($nn$) for this approximation problem of optimal market views:
\begin{equation} \label{appx}
\hat{Q}_t = nn(prices, volumes, sentiments;\, Q_t^{\ast})
\end{equation}
We denote the time series of asset prices $price_{t-k}, price_{t-k+1}, ..., price_t$ by a lag operator $\mathcal{L}^{0\sim k}price_t$. The notation of trading volumes follows a similar form. Then the model input at each time point: $[\mathcal{L}^{0\sim k}price_t, \mathcal{L}^{0\sim k}volume_t, sentiment_t,\\ capital_t]$ can be denoted by $[p,v,s,c]_t$ in short.

Two types of neural models, including a neural-fuzzy approach and a deep learning approach are trained for comparison. Fig.~\ref{dataflow} provides an illustration of the online training process using a long short-term memory (LSTM) network, where $\hat{Q}$ is the output. 
\begin{figure}[htbp]
\centering
\def\layersep{1.5cm}
\includegraphics[width=0.772\textwidth]{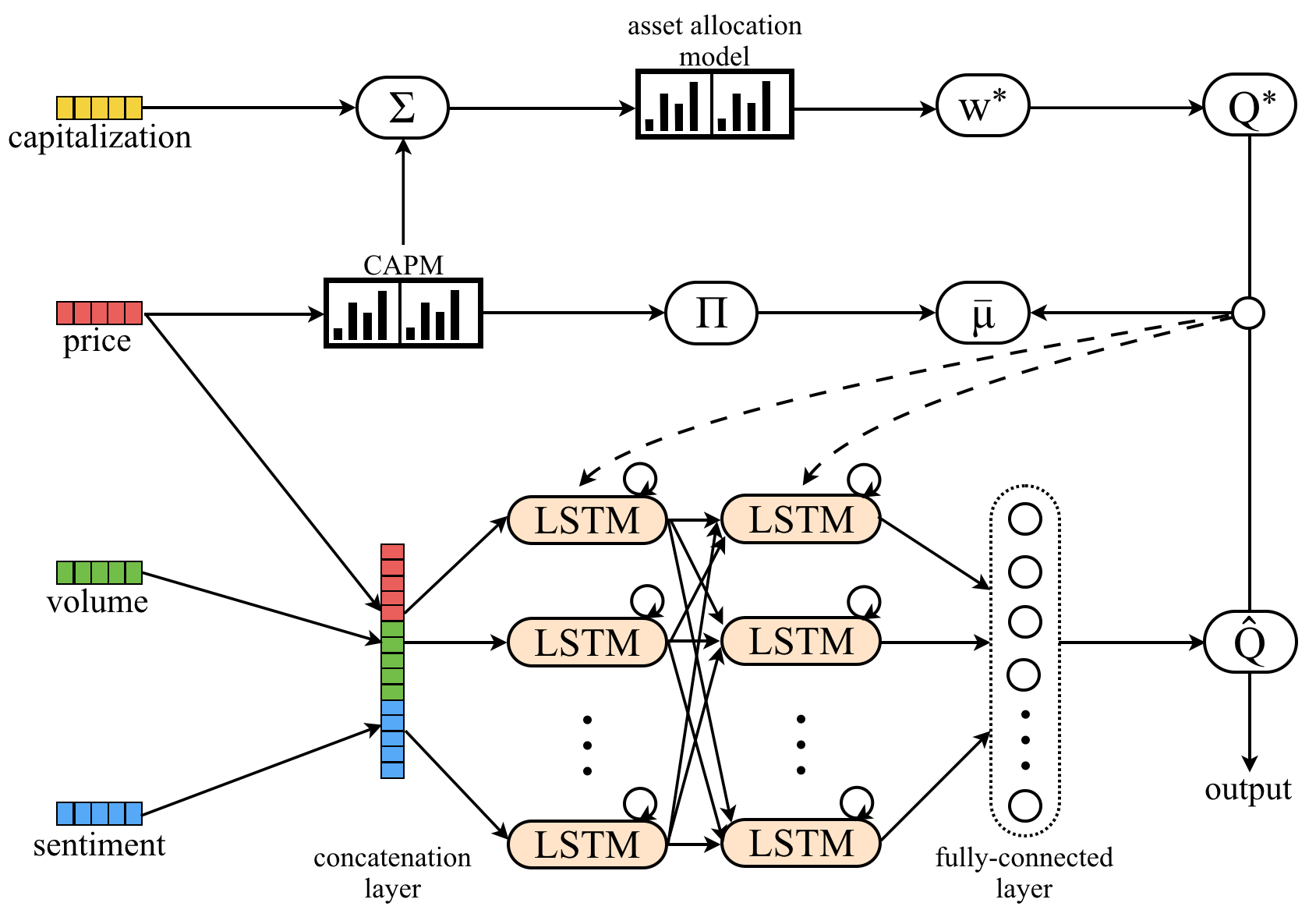}
\caption{Model training process (LSTM) with/without sentiment information.} \label{dataflow}
\end{figure}

\subsubsection{Dynamic evolving neural-fuzzy inference system (DENFIS)} is a neural network model with fuzzy rule nodes~\cite{02-KN-T}. The partitioning of which rule nodes to be activated is dynamically updated with the new distribution of incoming data. This evolving clustering method (ECM) features the model with stability and fast adaptability. Comparing to many other fuzzy neural networks, DENFIS performs better in modeling nonlinear complex systems~\cite{me.17-a}.

Considering the financial market as a real-world complex system, we learn the first-order Takagi-Sugeno-Kang type rules online. Each rule node has the form of:
\begin{align*} 
&\emph{IF $\mathcal{L}^{0\sim k}attribute_{t,i}=pattern_i,\, i=1,2,..., N$} \\
&\quad\emph{THEN $\hat{Q_t}=f_{1,2,..., N}([p,v,s]_t)$}
\end{align*}
where we have 3 attributes and $(2^N-1)$ candidate functions to activate. In our implementation of the DENFIS model, all the membership functions are symmetrical and triangular, which can be defined by two parameters $b\pm d/2$. $b$ is where the membership degree equals to $1$; $d$ is the activation range of the fuzzy rule. In our implementation, $b$ is iteratively updated by linear least-square estimator of existing consequent function coefficients.

\subsubsection{LSTM} is a type of recurrent neural network with gated units. This unit architecture is claimed to be well-suited for learning to predict time series with an unknown size of lags and long-term event dependencies. Early attempts, though not very successful~\cite{01-GF-T}, have been made to apply LSTM to time series prediction. It is now recognized that though LSTM cells can have many variants, their performance across different tasks are similar~\cite{16-GK-T}. 

Therefore, we use a vanilla LSTM unit structure. Our implementation of LSTM cells follows the update rules of the input gate, forget gate, and output gate as in Eq.~\ref{lstmup}: 
\begin{equation} \label{lstmup}
\begin{split}
i_t &= \sigma(W_i\cdot[\,h_{t-1},[p,v,s]_t\,]+b_i)\\
f_t &= \sigma(W_f\cdot[\,h_{t-1},[p,v,s]_t\,]+b_f)\\
o_t &= \sigma(W_o\cdot[\,h_{t-1},[p,v,s]_t\,]+b_o)
\end{split}
\end{equation}
where $\sigma$ denotes the sigmoid function, $h_{t-1}$ is the output of the previous state, $W$ is a state transfer matrix, and $b$ is the bias. 

The state of each LSTM cell $c_t$ is updated by:
\begin{equation} 
\begin{split}
&c_t=f_t \odot c_{t-1} + i_t \odot(W_c\cdot[\,h_{t-1},[p,v,s]_t\,]+b_c)\\
&h_{t-1}=o_t\odot \tanh(c_{t-1})
\end{split}
\end{equation}

We make the training process online as well, in a sense that each time a new input is received, we use the previous states and parameters of LSTM cells $[c_{t-1}, \mathbf{W}, \mathbf{b}]$ to initialize the LSTM cells for period $t$.

\section{Experiments} \label{rsts}
To evaluate the quality and effectiveness of our formalization of market views, we run trading simulations with various experimental settings. 

\subsection{Data}
The data used in this study are publicly available on the Web\furl{github.com/fxing79/ibaa}. We obtain the historical closing price of stocks and daily trading volumes from the Quandl API\furl{www.quandl.com/tools/api}; the market capitalization data from Yahoo! Finance; the daily count and intensity of company-level sentiment time series from PsychSignal\furl{psychsignal.com}. The sentiment intensity scores are computed from multiple social media platforms using NLP techniques. Fig.~\ref{150days_aapl} depicts a segment example of the public mood data stream. The market is closed on weekends, so a corresponding weekly cycle of message volume can be observed.

We investigate a window of around 8 years (2800 days). All the time series are trimmed from 2009-10-05 to 2017-06-04. For missing values such as the closing prices on weekends and public holidays, we fill them with the nearest historical data to train the neural models. 
The lagged values we use for both price and trading volume consist of 4 previous days and a moving average of the past 30 days, that is, the input of our neural models takes the form of Eq.~\ref{inputf1} and~\ref{inputf2}:
\begin{figure}[htp]
\centering
\includegraphics[width=\textwidth]{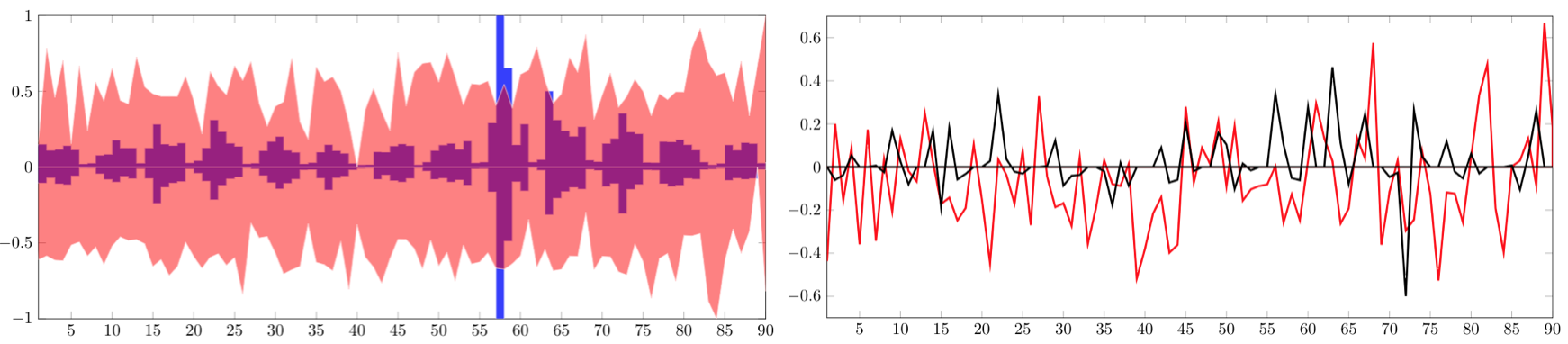}
\caption{The volume of daily tweets filtered by cashtag \texttt{AAPL} (blue, left); average sentiment intensity (red, left); net sentiment polarity (red, right); daily returns (black, right) in a time period of 90 days (2017-03-04 to 2017-06-04). All the series are normalized.} \label{150days_aapl}
\end{figure}
\begin{align}
\mathcal{L}^{0\sim k}price_t&=(p_t, p_{t-1}, p_{t-2}, p_{t-3}, \frac{\sum^{30}_{i=1} p_i}{30})\label{inputf1}\\
\mathcal{L}^{0\sim k}volume_t&=(v_t, v_{t-1}, v_{t-2}, v_{t-3}, \frac{\sum^{30}_{i=1} v_i}{30})\label{inputf2}
\end{align}

\subsection{Trading Simulation}
We construct a virtual portfolio consisting of 5 big-cap stocks: Apple Inc (\texttt{AAPL}), Goldman Sachs Group Inc (\texttt{GS}), Pfizer Inc (\texttt{PFE}), Newmont Mining Corp (\texttt{NEM}), and Starbucks Corp (\texttt{SBUX}). This random selection covers both the NYSE and NASDAQ markets and diversified industries, such as technology, financial services, health care, consumer discretionary etc. During the period investigated, there were two splits: a 7-for-1 split for \texttt{AAPL} on June 9th 2014, and a 2-for-1 split for \texttt{SBUX} on April 9th 2015. The prices per share are adjusted according to the current share size for computing all related variables, however, dividends are not taken into account. 
We benchmark our results with two portfolio construction strategies:

\begin{figure*}[ht]
\centering
\renewcommand{\arraystretch}{1}
\subfigure[No views]{
\label{Fig.sub.1}
\includegraphics[width=0.315\textwidth]{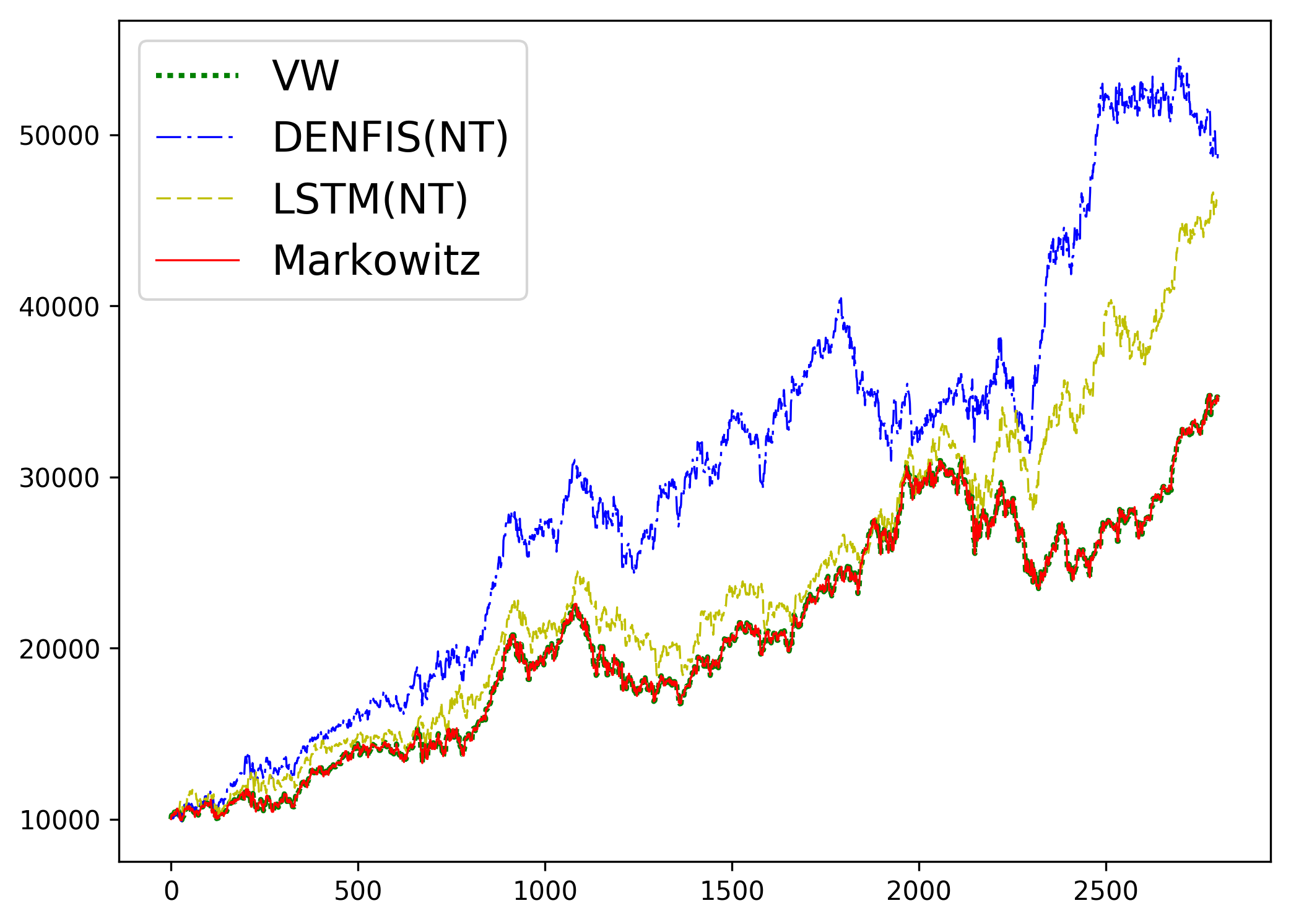}}
\subfigure[Random views]{
\label{Fig.sub.2}
\includegraphics[width=0.315\textwidth]{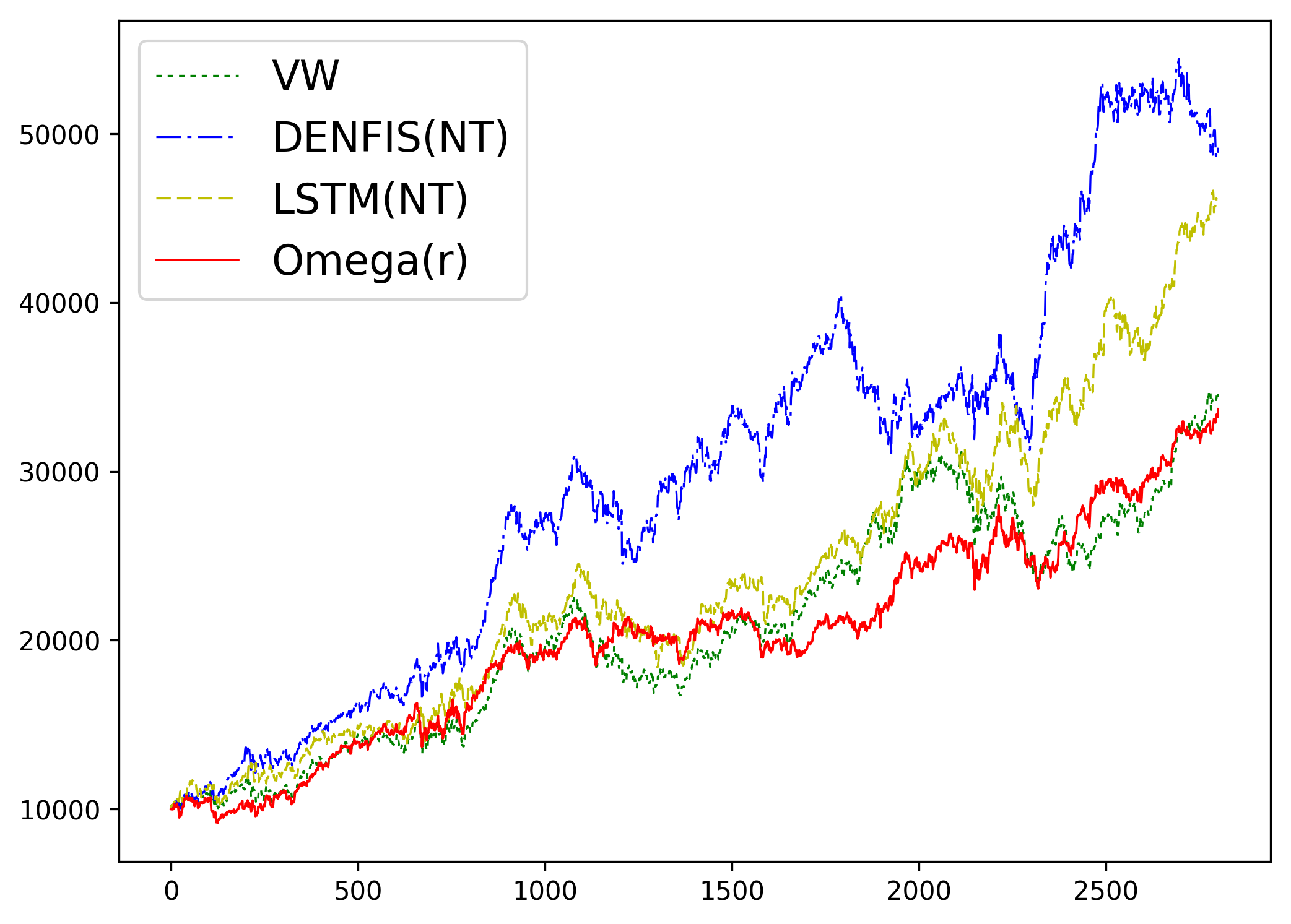}}
\subfigure[BL+sentiment, t=90]{
\label{Fig.sub.3}
\includegraphics[width=0.315\textwidth]{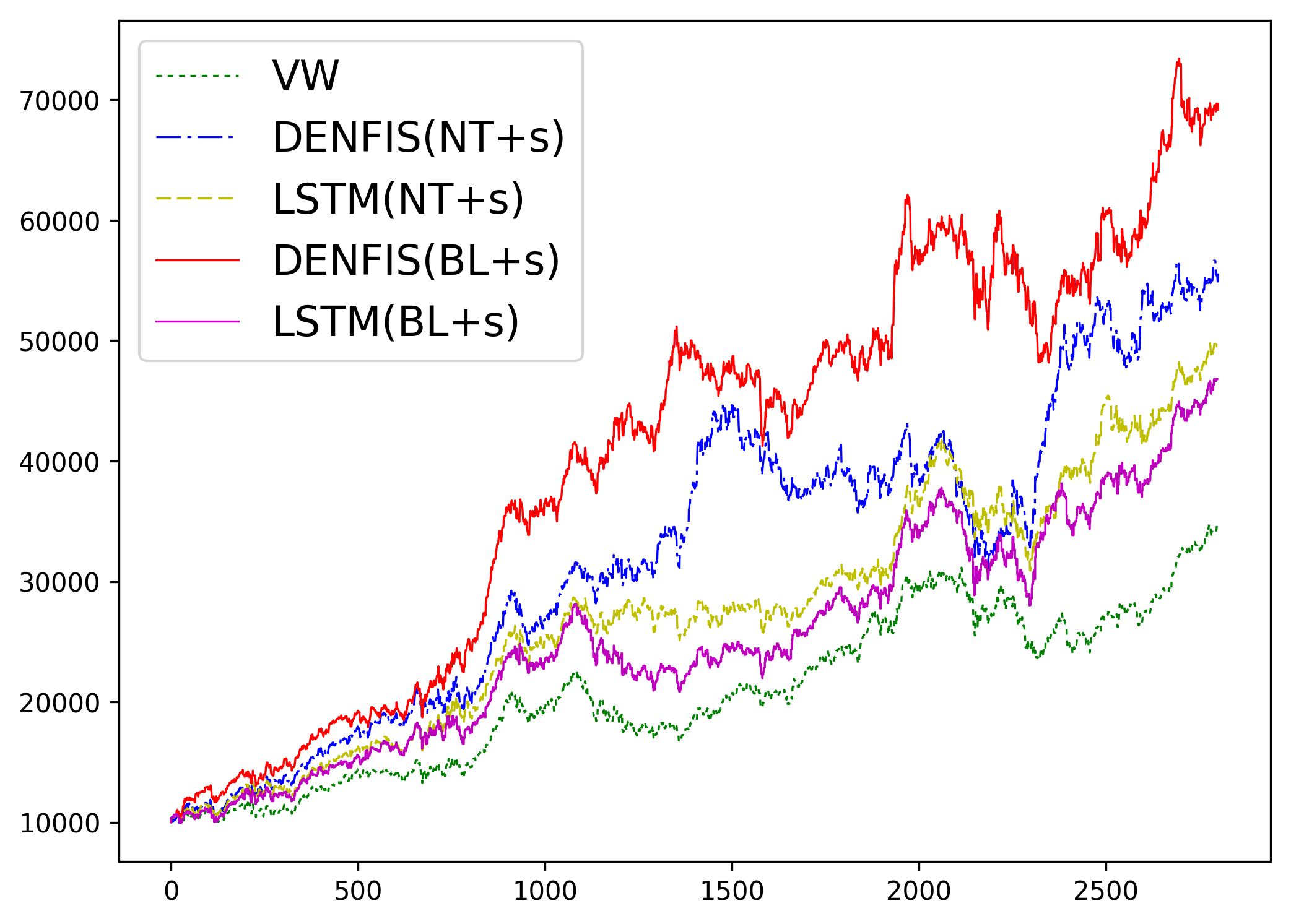}}
\subfigure[DENFIS+sentiment]{
\label{Fig.sub.4}
\includegraphics[width=0.315\textwidth]{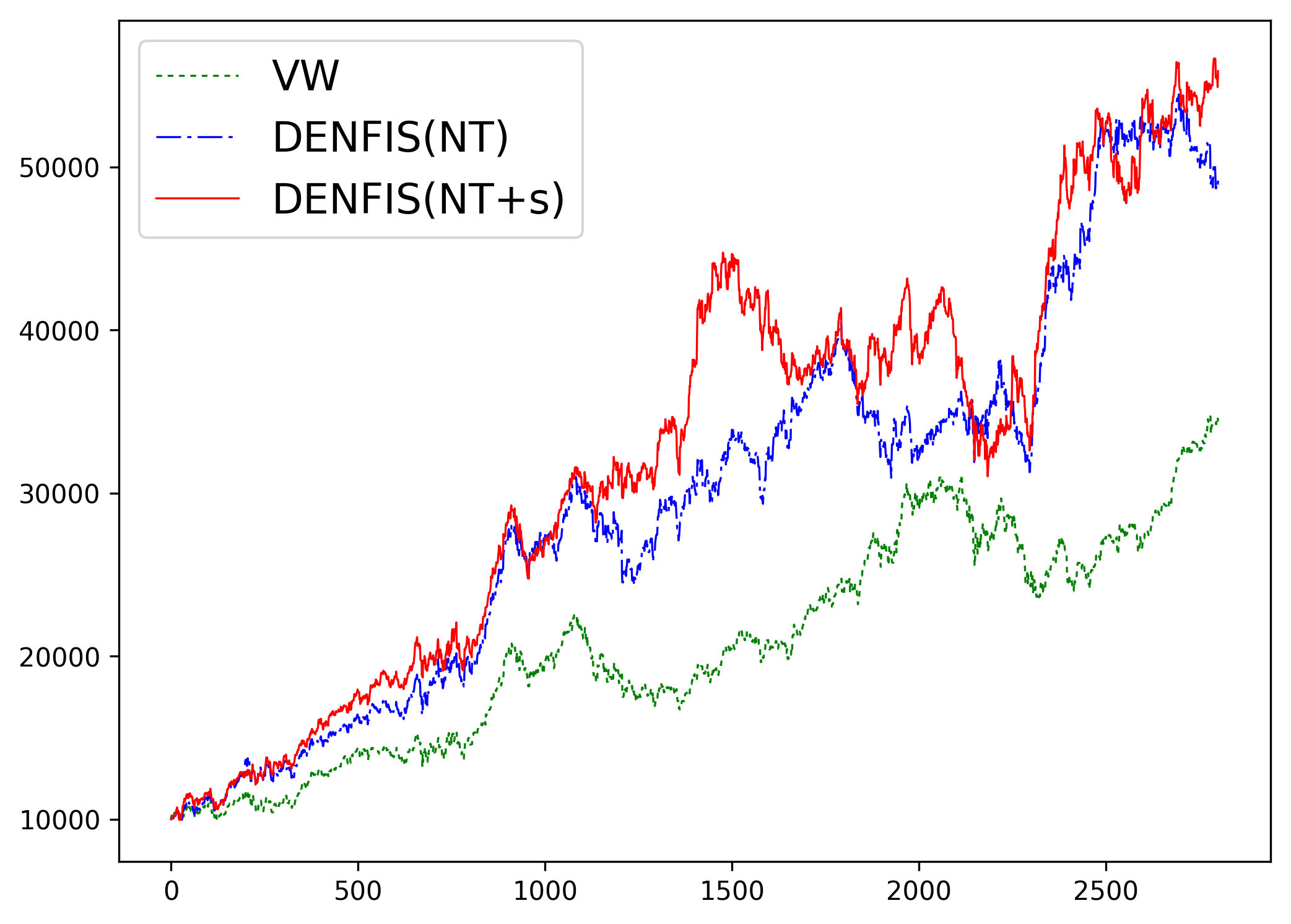}}
\subfigure[LSTM+sentiment]{
\label{Fig.sub.5}
\includegraphics[width=0.315\textwidth]{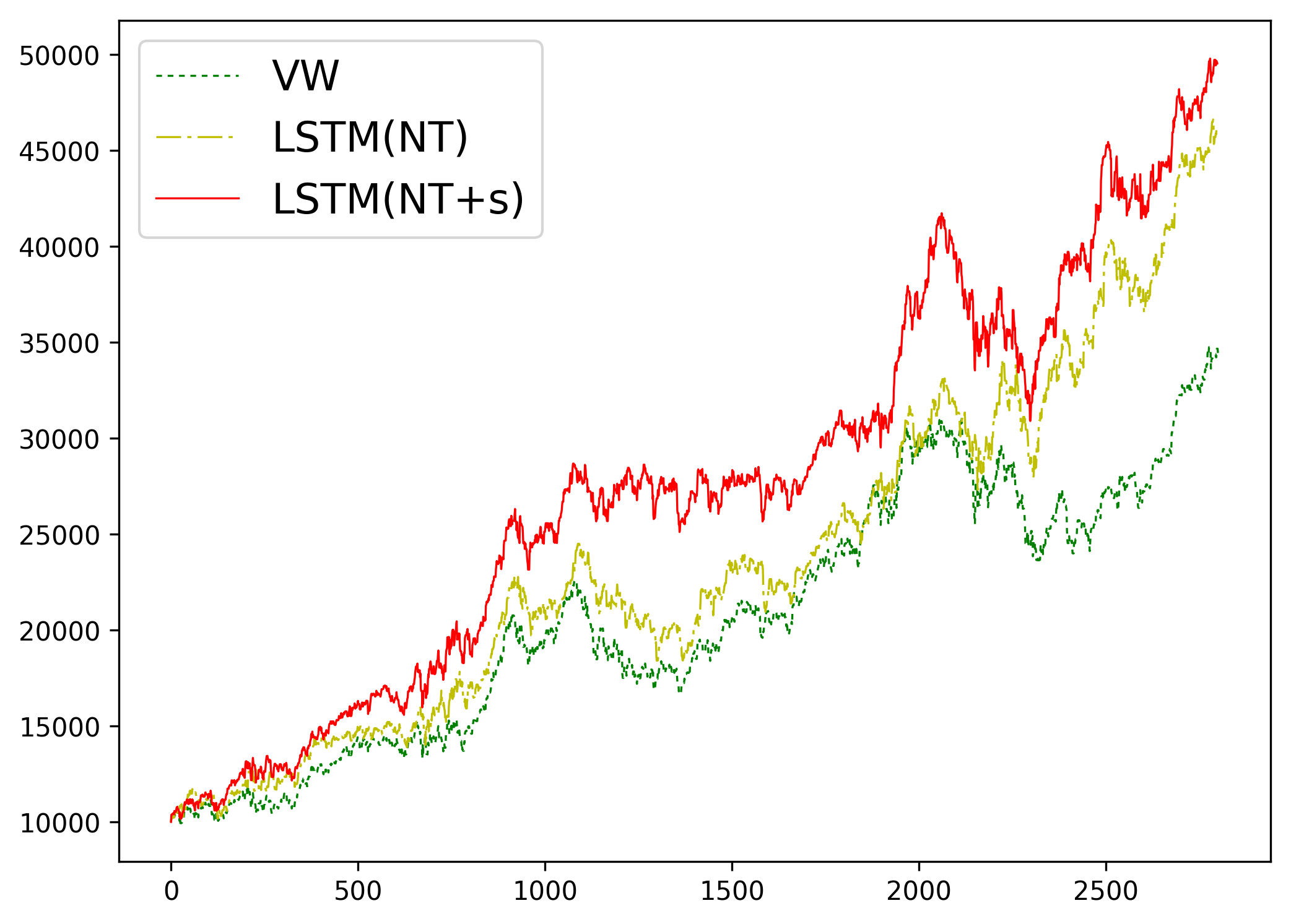}}
\subfigure[BL+sentiment, t=180]{
\label{Fig.sub.6}
\includegraphics[width=0.315\textwidth]{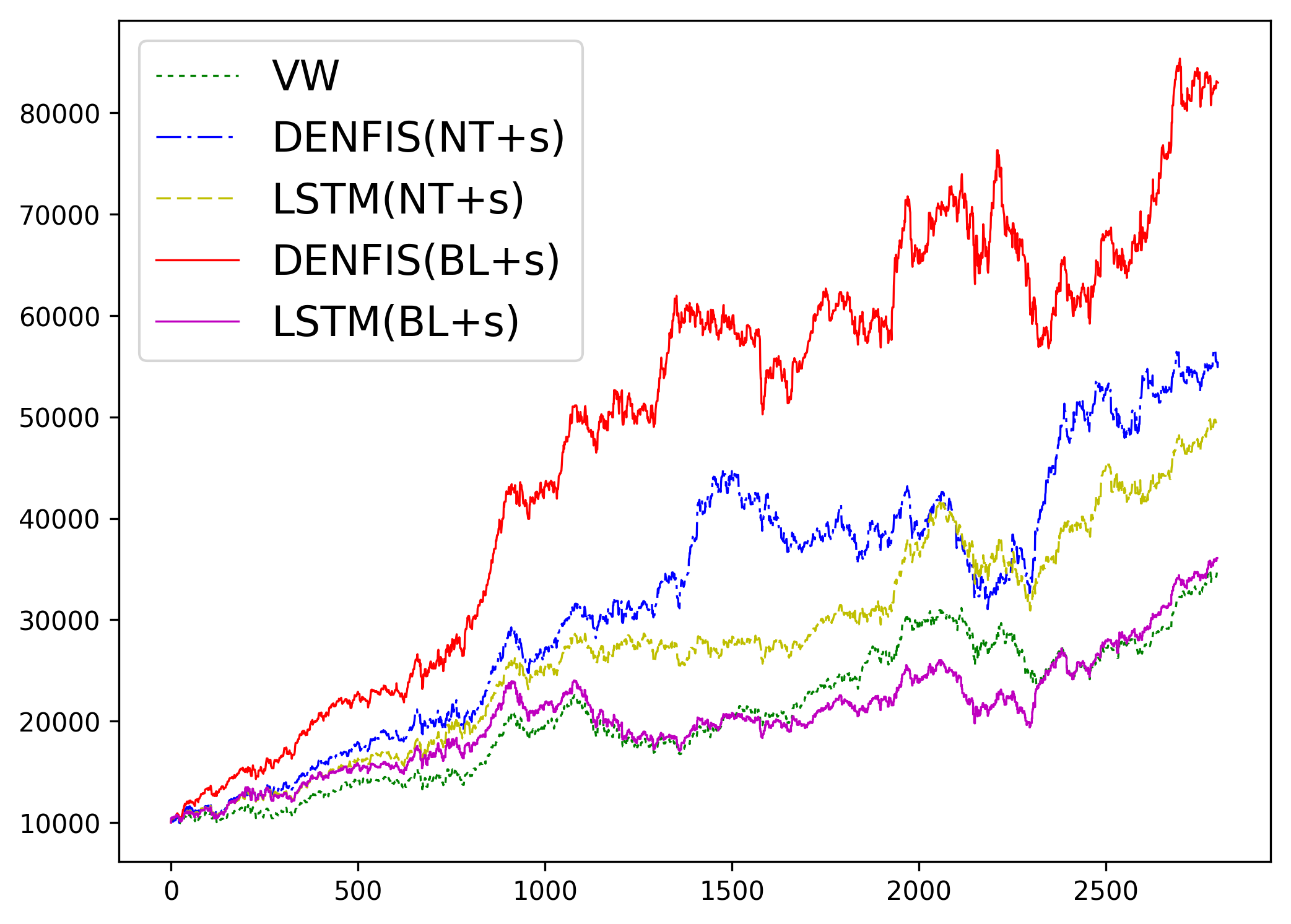}}
\caption{Trading simulation performance with different experimental settings: (x-axis: number of trading days; y-axis: cumulative returns). In particular, we use a timespan of 90 and 180 days for our approach. The performance of neural trading is independent from timespan, accordingly the two neural models are compared in~\ref{Fig.sub.4} and~\ref{Fig.sub.5} respectively for better presentation.}
\label{tpd}
\end{figure*}

\textbf{1) The value-weighted portfolio (VW):} we re-invest daily according to the percentage share of each stock's market capitalization. In this case, the portfolio performance will be the weighted average of each stock's performance. This strategy is fundamental, yet empirical study~\cite{10-FE-F} shows that beating the market even before netting out fees is difficult.

\textbf{2) The neural trading portfolio (NT):} we remove the construction of market views and directly train the optimal weights of daily position with the same input. For this black-box strategy, we can not get any insight on how this output portfolio weight comes about.

In the simulations, we assume no short selling, taxes, or transaction fees, and we assume the portfolio investments are infinitely divisible, starting from $10,000$ dollars. We construct portfolios with no views ($\Omega_{\varnothing}$, in this case the degenerate portfolio is equivalent to Markowitz's mean-variance portfolio using historical return series to estimate covariance matrix as a measure of risk), random views ($\Omega_{r}$), the standard views using the construction of Black-Litterman model ($\Omega_{0}$), with and without our sentiment-induced expected returns ($s$). The trading performances are demonstrated in Fig.~\ref{tpd}.

Following the previous research~\cite{02-HG-F}, we set the risk aversion coefficient $\delta=0.25$ and confidence level of CAPM, $\tau=0.05$. Let the activation range of fuzzy membership function $d=0.21$, we obtain 21 fuzzy rule nodes from the whole online training process of DENFIS. This parameter minimizes the global portfolio weight error. For the second neural model using deep learning, we stack two layers of LSTMs followed by a densely connected layer. Each LSTM layer has 3 units; the densely connected layer has 50 neurons, which is set times larger than the number of LSTM units. We use the mean squared error of vector $Q$ as the loss function and the rmsprop optimizer~\cite{12-TT-U} to train this architecture. We observe fast training error convergence in our experiments.

\subsection{Performance Metrics}
Diversified metrics have been proposed to evaluate the performance of a given portfolio~\cite{04-BM-F,06-HR-F,17-XF-T}. We report four metrics in our experiments.

Root mean square error (RMSE) is a universal metric for approximation problems. It is widely used for engineering and data with normal distribution and few outliers. 
We calculate the RMSE of our realized portfolio weights to the optimal weights:
\begin{equation}
\text{RMSE}= \sqrt{\frac{1}{n} \sum_{i=1}^n \|w_i-\hat{w_i}\|^2}\\
\end{equation}

Annualized return (AR) measures the profitability of a given portfolio. We calculate the geometric mean growth rate per year, which is also referred to as compound annual growth rate (CAGR) for these 2800 days. 

Sharpe ratio (SR) is a risk-adjusted return measure. We choose the value-weighted portfolio as a base, consequently the Sharpe ratio of VW will be $1$:
\begin{equation}
\text{SR}= \frac{\mathbb{E}(R_{portfolio}/R_{VW})}{\sigma (R_{portfolio})/\sigma (R_{VW})}\\
\end{equation}
SR uses the standard deviation of daily returns as the measure of risk. Note that to distinguish between good and bad risk, we can also use the standard deviation of downside returns only~\cite{94-SF-F}. Our results suggest that the Sortino ratios, which are not reported due to page limit, are very close to SRs and lead to the same conclusion. 

The maximum drawdown (MDD) measures the maximum possible percentage loss of an investor:
\begin{equation}
\text{MDD}= \max\limits_{0<t<\tau}\Big\{\frac{Value_t-Value_\tau}{Value_t}\Big\}\\
\end{equation}
Asset allocation strategies with large MDD are exposed to the risk of withdrawal. Table~\ref{metrics} presents the metrics.

\begin{table}[]
\centering
\caption{Performance metrics for various portfolio construction strategies, timespan=90 and 180 days. Top three metrics are in bold.}\label{metrics}
\begin{tabular}{lp{1.7cm}<{\centering}p{1.7cm}<{\centering}p{1.7cm}<{\centering}p{1.7cm}<{\centering}}
\toprule
        & RMSE & SR & MDD(\%) & AR(\%) \\ \midrule
VW       & 0.8908 & 1.00 & 25.81 & 17.49\\
Markowitz90($\Omega_{\varnothing}$)& 0.9062 & 1.00 & 25.81 & 17.51\\
Markowitz180($\Omega_{\varnothing}$)& 0.8957 & 1.00 & 25.82 & 17.45\\
BL90($\Omega_r$) & 0.9932 & 0.90 & \bf{23.47} & 17.17\\
BL180($\Omega_r$)& 0.9717 & 1.06 & \bf{20.59} & 22.31\\
DENFIS(NT)   & 0.9140 & \bf{2.94} & 29.84 & 23.09\\
DENFIS(NT+$s$)  & 0.9237 & \bf{4.35} & \bf{23.07} &\bf{25.16}\\
DENFIS(BL90+$s$) & 0.9424 & 1.52 & 24.44 & \bf{28.69}\\
DENFIS(BL180+$s$)& 0.9490 & \bf{1.58} & 24.19 & \bf{29.49}\\
LSTM(NT)    & \bf{0.8726} & 1.38 & 25.68 & 22.10\\
LSTM(NT+$s$)   & 0.8818 & 1.42 & 25.96 & 23.21\\
LSTM(BL90+$s$)  & \bf{0.8710} & 1.34 & 25.90 & 22.33\\
LSTM(BL180+$s$)& \bf{0.8719} & 1.07 & 24.88 & 17.68 \\ \bottomrule
\end{tabular}
\end{table}

\subsection{Findings}
We have some interesting observations from Fig.~\ref{tpd} and Table~\ref{metrics}. SR and AR are usually considered as the most important, and besides, RMSE and MDD are all very close in our experiments. The correlation between RMSE and the other three metrics is weak, though it is intuitive that if the realized weights are close to the optimal weights, the portfolio performance should be better. On the contrary, the LSTM models seem to overfit as they are trained on the mean squared error of weights or expected return of views~\cite{90-PP-F}. However, as mentioned in Sect.~\ref{intro}, the relationship between weights and daily returns is nonlinear. Therefore, \emph{holding portfolio weights that are close to the optimal weights does not necessarily means that the AR must be higher}. In fact, it is dangerous to use any seemingly reasonable metrics outside the study of asset allocation, such as directional accuracy of price change prediction~\cite{11-BJ-T,16-YA-T}, to evaluate the expected portfolio performance.

The Markowitz portfolio ($\Omega_{\varnothing}$) displays a very similar behavior to the market-following strategy. This is consistent with the inefficacy of the mean-variance approach in practice mentioned by previous studies: holding the Markowitz portfolio is holding the market portfolio. In fact, if the CAPM holds, the market portfolio already reflects the adjustments to risk premiums, that is, fewer market participants will invest on highly risky assets, for this reason their market capitalization will be smaller as well. 

However, the Black-Litterman model does not always guarantee better performance over the Markowitz portfolio. ``Garbage in, garbage out" still holds for this circumstance. Given random views ($\Omega_r$), it can be worse than market-following in terms of both SR and AR. The lesson learned is that \emph{if the investor knows nothing, it is better to hold no views and follow the market than pretending to know something}. 

In our experiments, DENFIS generally performs better than LSTM models, achieving higher SRs and ARs. The reason may be LSTM models adapt faster to the incoming data, whereas financial time series are usually very noisy. The ECM mechanism provides DENFIS models with converging learning rates, which may be beneficial to the stability of memorized rules. However, it is important to note that \emph{the ARs for both neural models improve with the blending of sentiments}. The timespan used to estimate correlation and volatility of assets seems not that critical. DENFIS models perform better with longer timespan, while LSTM models perform better with shorter timespan. The Markowitz portfolio is less affected by timespan. 

\section{A Story} \label{story}
One of the main advantages of our formalization and computing of market views is that some \emph{transparency} is brought to the daily asset reallocation decisions. In most cases, a stock price prediction system based on machine learning algorithms cannot justify ``why he thinks that price will reach that predicted point". Unlike these systems, our method can tell a story of the portfolio to professional investors and advice seekers. Take June 1st 2017 as an example:\\

``{\large O}n June 1st 2017, we observe $164$ positive opinions of polarity $+1.90$, $58$ negative opinions of polarity $-1.77$ on \texttt{AAPL} stock; $54$ positive opinions of polarity $+1.77$, $37$ negative opinions of polarity $-1.53$ on \texttt{GS} stock; $5$ positive opinions of polarity $+2.46$, $1$ negative opinion of polarity $-1.33$ on \texttt{PFE} stock; no opinion on \texttt{NEM} stock; and $9$ positive opinions of polarity $+1.76$, $5$ negative opinions of polarity $-2.00$ on \texttt{SBUX} stock.
Given the historical prices and trading volumes of the stocks, we have $6.29\%$ confidence that \texttt{AAPL} will outperform the market by $-70.11\%$; $23.50\%$ confidence that \texttt{GS} will outperform the market by $263.28\%$; $0.11\%$ confidence that \texttt{PFE} will outperform the market by $-0.50\%$; $1.21\%$ confidence that \texttt{SBUX} will outperform the market by $4.57\%$. Since our current portfolio invests $21.56\%$ on \texttt{AAPL}, $25.97\%$ on \texttt{GS}, $29.43\%$ on \texttt{PFE}, and $23.04\%$ on \texttt{SBUX}, by June 2nd 2017, we should withdraw all the investment on \texttt{AAPL}, $2.76\%$ of the investment on \texttt{GS}, $81.58\%$ of the investment on \texttt{PFE}, and $30.77\%$ of the investment on \texttt{SBUX}, and re-invest them onto \texttt{NEM}."

\section{Conclusion and Future Work} \label{con}
In previous studies which have considered sentiment information for financial forecasting, the role of the investor as a market participant is often absent. In this paper, we present a novel approach to incorporate market sentiment by fusing public mood data stream into the Bayesian asset allocation framework. 

This work is pioneering in formalizing sentiment-induced market views. Our experiments show that the market views provide a powerful method to asset management. We also confirm the efficacy of public mood data stream based on social media for developing asset allocation strategies.

A limitation of this work is that we fixed a portfolio with five assets, though in practice the portfolio selection problem is of equal importance. How to assess the quality of sentiment data is not discussed in this paper as well. We are not at the stage to distinguish or detect opinion manipulation though concern like the open networks are rife with bots does exist. Another limitation is that survivor bias is not taken into account: the risk that assets selected in the portfolio may quit the market or suffer from a lack of liquidity. This problem can be alleviated by only including high quality assets. In the future, we will study examining the quality of sentiment data obtained using different content analysis approaches. We also plan to develop a Bayesian asset allocation model that can deal with market frictions.

\bibliographystyle{splncs04}
\bibliography{market_views}

\end{document}